\newtheorem{Theorem}{Theorem}
\newtheorem{Lemma}{Lemma}
\begin{document}

\preprint{APS/123-QED}

\title{Transversal switching between generic stabilizer codes}

\author{Cupjin Huang$^{1,}$}
\email{cupjinh@umich.edu}
\author{Michael Newman$^{2,}$}%
 \email{mgnewman@umich.edu}
\affiliation{%
 $^1$Department of Electrical Engineering and Computer Science\\
  \smallskip
  $^2$Department of Mathematics\\
  \smallskip
  University of Michigan, Ann Arbor, MI 48109, USA
}%

\date{\today}

\begin{abstract}
    We propose a randomized variant of the stabilizer rewiring algorithm (SRA), a method for constructing a transversal circuit mapping between any pair of stabilizer codes.  As gates along this circuit are applied, the initial code is deformed through a series of intermediate codes before reaching the final code.  With this randomized variant, we show that there always exists a path of deformations which preserves the code distance throughout the circuit, while using at most linear overhead in the distance.  Furthermore, we show that a random path will almost always suffice, and discuss prospects for implementing general fault-tolerant code switching circuits.
\end{abstract}

\maketitle


\section{Introduction}

It is an oft-cited fact that no quantum error-correcting code can implement a universal transversal logical gate set \cite{Eastin:2008, Zeng:2007, Newman:2017}.  As a result, there have been several attempts to circumvent this no-go theorem to achieve universal fault-tolerant quantum computation.  These candidates include magic state distillation \cite{Fowler:2013, Bravyi:2012}, gauge fixing \cite{Paetznick:2013, Bombin:2013}, and more recently pieceable fault-tolerance \cite{Yoder:2016, Yoder:2017}.  These last two candidates can be seen as a special case of the more general approach of code switching \cite{Hill:2013, Anderson:2014, Brun:2015, Bombin:2007, Nautrup:2017}.

Code switching is a natural idea: given two codes, map information encoded in one code to information encoded in the other.  For this mapping to be fault-tolerant, we must often perform several intermediate error-correction steps to ensure that faults do not grow out of hand.  Thus, it is essential that during a circuit switching between codes, the extremal error-correcting codes are deformed through a series of intermediate error-correcting codes from one to another.  This notion of intermediate error-correction was used in \cite{Anderson:2014} to implement universal transversal computation by switching between the Steane and Reed-Muller codes, whose complementary transversal gate sets are universal when taken together.  However, universal fault-tolerant computation is not the only consideration in choosing error-correcting codes, and different codes can be tailored to different tasks.  For this reason, it would be nice to have a way of converting between different quantum codes fault-tolerantly.

Simply decoding and re-encoding information is undesirable, since the bare information becomes completely unprotected during this transformation.  Past work has succeeded in constructing fault-tolerant circuits for switching between particular quantum error-correcting codes fault-tolerantly, while providing guarantees that these circuits are optimal within some framework \cite{Hill:2013}.

Recently, \cite{Colladay:2017} considered switching between generic stabilizer codes, and proposed the stabilizer rewiring algorithm (SRA) for constructing a transversal circuit mapping between \emph{any} pair of stabilizer codes.  The circuit complexity scales quadratically with the code length, and depends on a choice of presentation for the code generators.  Different presentations will result in different circuits mapping between different sets of at most $n$ intermediate codes.  This circuit necessarily fails to be fault-tolerant when these intermediate codes have low distance. This leads to the central question: \emph{is there an efficient way of fault-tolerantly converting between generic stabilizer codes?}

\subsection{Results}

Towards this goal, we propose a randomized variant of the SRA, the randomized SRA (rSRA). We show that for any pair of stabilizer codes, with at most linear overhead with respect to the distance of the codes, there always exists a transversal circuit that maps between intermediate codes of high distance.  Furthermore, using slightly more overhead, such a path can be found with high probability.  In particular, we show the following.

\newtheorem*{thm:1}{Theorem \ref{thm:1}}
\begin{thm:1}[Informal] For any two $[[n,k,d]]$ stabilizer codes $S_1$ and $S_2$, the rSRA scheme gives a transversal circuit mapping from $S_1$ to $S_2$ where each intermediate code has distance at least $d$ with probability $1-\varepsilon$, using
    $$m=O\left(d\log\frac{n}{d}+\log\frac1\varepsilon\right)$$
    ancilla qubits.
\end{thm:1}

This \emph{distance-preserving} property is a necessary, but not sufficient condition to ensure a fault-tolerant mapping.  So while the algorithm does \emph{not} necessarily yield a fault-tolerant conversion, it gives a universal upper bound on the number of ancilla qubits required for distance-preserving transversal code transformation. As was noted in \cite{Colladay:2017}, the usefulness of this scheme is in its generality.  While the upper bound may be of independent conceptual interest, we hope that with modification, the rSRA can be applied as a useful schema for searching for fault-tolerant paths between small codes.  We provide small examples of such transversal paths in Section~\ref{sec:examples}, including a path between the $[[5,1,3]]$ and $[[7,1,3]]$ codes that without modification protects against erasure with no overhead.

\subsection{Organization}
In Section~\ref{sec:prelim} we introduce some preliminaries and notation that will be used throughout the paper. The main rSRA schematic is presented in Section~\ref{sec:rSRA}. Some examples illustrating different parts of the rSRA and demonstrating small conversion paths are presented in Section~\ref{sec:examples}. The proof of the main theorem is presented in Section~\ref{sec:proof}. Some discussion on fault-tolerance and possible improvements for the rSRA can be found in Section~\ref{sec:discussion}. Readers interested in the technical details may refer to the appendices for additional lemmas.

\section{Preliminaries}
\label{sec:prelim}
Let $\mathcal{P}^n$ denote the $n$-qubit Pauli group.  Then a \emph{stabilizer group} $S \subseteq \mathcal{P}^n$ is an abelian subgroup of the Pauli group not containing $-I$.  To any such stabilizer group $S$, we can associate a subspace $C_S \subseteq (\mathbb{C}^2)^{\otimes n}$ defined as the simultaneous $+1$-eigenspace of all the operators in $S$.  We call such a subspace $C_S$ a \emph{stabilizer code}.

A stabilizer code $C_S$ has parameters $[[n,k,d]]$.  Here, $n$ is the number of physical qubits comprising the code, $k$ is the number of logical qubits of the code $\log(\dim(C_S))$, and $d$ is the distance of the code.  More precisely, the normalizer $\mathcal{N}_{\mathcal{P}^n}(S)$ represents the set of logical Pauli operators for $C_S$, and so $$d := \min\limits_{L \in \mathcal{N}(S) \backslash S}(|L|)$$ where $|\cdot|$ denotes the weight of the Pauli operator.  Note that the number of stabilizer in the corresponding stabilizer subgroup is $n-k$.

Given any stabilizer group $S$, if we choose a generating set $G_S$ for $S$, we can define a syndrome map
\begin{align*}
Syn_{G}&: \mathcal{P}^n \longrightarrow \{0,1\}^{n-k} \\
Syn_G&(e)_i = \begin{cases} 0 \text{ if $[e,g_i] = 0$} \\ 1 \text{ if $\{e,g_i\} = 0$} \end{cases}
\end{align*}
\noindent for $G = (g_1, \ldots g_{n-k})$.  Then equivalently, $$d = \min\limits_{L \in \ker(Syn_G) \backslash S}(|L|)$$ and is independent of the choice of $G$.

Another convenient formalism for describing stabilizer groups is as subspaces of symplectic vector spaces, and we will use the two formulations interchangeably.  For any $P \in \mathcal{P}^n/\mathcal{U}(1)$, if $$ P = X^{a_1}Z^{b_1} \otimes X^{a_2}Z^{b_2} \ldots \otimes X^{a_n}Z^{b_n}$$ then we can associate to $P$ the vector $\vec{P} := (\vec{a} | \vec{b})^T \in \mathbb{F}_2^{2n}$.  Equip $\mathbb{F}_2^{2n}$ with a symplectic bilinear form $$\langle \vec{v},\vec{w} \rangle := \vec{v}^TB\vec{w}$$ where $B$ is the $2n \times 2n$ block matrix defined by $$B = \begin{pmatrix} 0 & I \\ I & 0 \end{pmatrix}. $$  Then Paulis $P,Q$ commute if any only if their associated vectors $\vec{P},\vec{Q}$ are orthogonal in this vector space.  Thus, we can equivalently define a stabilizer group as a self-orthogonal subspace of this vector space.  A \emph{generator matrix} $G$ is then a choice of basis for this subspace, so that for $C$ an $[[n,k]]$ code, $G$ will be a rank $(n-k)$ matrix of shape $2n \times (n-k)$.  The syndrome map can then be similarly defined as $$Syn_G(\vec{P}) = G^TB\vec{P}.$$  Further note that for any $A \in GL(\mathbb{F}_2,n-k)$, for any generator matrix $G$ for $S$, $GA^T$ is also a generator matrix for $S$.  The syndrome map satisfies $$Syn_{GA^T}(\vec{P}) = (GA^T)^TB\vec{P} = AG^TB\vec{P} = A\cdot Syn_G(\vec{P}).$$

So any action on the generator matrix induces a corresponding action on the syndrome vectors themselves.

Finally, we call a circuit $C$ on a class of encoded inputs \emph{$t$-fault-tolerant} if it is $t$-fault-tolerant in the exRec formalism \cite{Aliferis:2006}. Formally, given error correction procedure $EC$, $C$ is $t$-fault-tolerant if for any choice of $t$ faulty components in the combined circuit $EC \cdot C \cdot EC$, a faultless version of $EC$ applied to the output of the combined circuit can successfully recover the data. If $t \geq 1$ we may simply call the circuit fault-tolerant.

\section{The \lowercase{r}SRA schematic}
\label{sec:rSRA}

The rSRA modifies the SRA presented in \cite{Colladay:2017}, whose central insight is the following.  Consider two stabilizer groups $S,S'$ with generating sets $G,G'$ satisfying the following nice property:
\begin{align*}
G &= \{g, g_1 , \ldots, g_l\} \\
G' &= \{g', g_1, \ldots, g_l\}
\end{align*}
where $\{g, g'\} = 0$.  We call two such codes for which one can choose such generating sets \emph{adjacent}.  Then one can readily check that the Clifford gate $\frac{1}{\sqrt{2}}(1 + g'g)$ maps information encoded in the stabilizer code defined by $G$ to the same information encoded in the stabilizer code defined by $G'$.  Letting $\ket{\psi}_G$ denote a logical state in the code associated to $G$, we see that
\begin{align*}
g_i \cdot \frac{1}{\sqrt{2}}(1 + g'g)\ket{\psi}_G &= \frac{1}{\sqrt{2}}(1 + g'g) \ket{\psi}_G \text{, and} \\
g' \cdot \frac{1}{\sqrt{2}}(1 + g'g)\ket{\psi}_G &=\frac{1}{\sqrt{2}}(g' + g)\ket{\psi}_G \\
&= \frac{1}{\sqrt{2}}(g' + 1) \ket{\psi}_G \\
&= \frac{1}{\sqrt{2}}(1 + g'g)\ket{\psi}_G.
\end{align*}
The insight is that this mapping can be done transversally.  While the Clifford transformation described need not be transversal, it can be simulated by a transversal Pauli measurement supplemented by a transversal Pauli gate controlled on classical information.  This is similar to gauge-fixing, in which one measures a logical operator of the gauge and then applies a corresponding logical gauge operator conditioned on the outcome.  To see this, consider the circuit described by:

\begin{enumerate}
\item Measure $g'$.
\item Apply $g$ conditioned on measurement outcome $-1$.
\end{enumerate}

\noindent Let $P^{\pm}$ denote the projector onto the ${+1}/{-1}$ eigenspace of $g'$.  Then, if the measurement outcome is $+1$, $$ \frac{1}{\sqrt{2}}(1 + g'g) \ket{\psi}_G= \frac{1}{\sqrt{2}}(1+g')\ket{\psi}_G = \sqrt{2}P^+\ket{\psi}_G .$$  Furthermore,

\noindent If the measurement outcome is $-1$,

\begin{align*}
\frac{1}{\sqrt{2}}(1 + g'g)\ket{\psi}_G &= \frac{1}{\sqrt{2}}(g-gg')\ket{\psi}_G \\
&= \frac{1}{\sqrt{2}}g(1-g')\ket{\psi}_G \\
&= \sqrt{2}gP^-\ket{\psi}_G.
\end{align*}

\noindent Thus, we see that we can \emph{transversally} perform the mapping $\ket{\psi}_G \rightarrow \ket{\psi}_{G'}$.

Now consider the more general case in which we have (non-adjacent) $S,S'$ describing $[[n,k]]$ and $[[n',k]]$ codes respectively.  We now describe a general randomized algorithm for outputting a circuit switching between these two codes, similar to \cite{Colladay:2017}, and will later show that this circuit is distance-preserving with high probability.  The inputs are arbitrary generator matrices $G,G'$ for stabilizer groups $S,S'$, along with a choice of ancilla size $m \in \mathbb{N}$.

\subsection{Preparing the generator matrices}

\begin{enumerate}
\item Append $\ket{0}$ ancilla to the smaller code so that the codes are of equal size.  We now assume that both codes are $[[n,k]]$ codes.
\item Append $\ket{0}^{\otimes m}$ to the first code, and $\ket{+}^{\otimes m}$ to the second. Note that this is equivalent to defining a pair of new stabilizer codes
    $$\hat{S}=\langle S\otimes I^{\otimes m},I^{\otimes n}\otimes Z\otimes I^{\otimes m-1},\ldots, I^{\otimes n+m-1}\otimes Z \rangle,$$
    $$\hat{S}'=\langle S'\otimes I^{\otimes m},I^{\otimes n}\otimes X\otimes I^{\otimes m-1},\ldots, I^{\otimes n+m-1}\otimes X \rangle.$$
\item Choose $G_A = G_A'$ to be a basis for the subspace defined by $\hat{S} \cap \hat{S}'$.
\item Choose $G_B$ to extend the basis of $G_A$ to a basis for $\mathcal{N}(\hat{S}') \cap \hat{S}$ and choose $G_B'$ to extend the basis of $G_A$ to a basis for $\mathcal{N}(\hat{S}) \cap \hat{S}'$.
\item\label{5} Choose $G_C$ to extend the basis $G_A \cup G_B$ to a basis for $\hat{S}$ and $G_C'$ to extend the basis $G_A' \cup G_B'$ to a basis for $\hat{S}'$.
\item\label{6} Let $H$ be the \emph{commutativity matrix} for $G_C,G_C'$ defined by $H := {G_C'}^TBG_C$. By Lemma~\ref{lem:2}, $H$ is invertible with dimension $|G_C|\times |G_C|$, where $|G_C|\geq m$.   So we can choose $M,N \in GL(\mathbb{F}_2,|G_C|): M^THN = I_{|G_C|}$ and redefine
\begin{align*}
G_C &\leftarrow G_C\cdot M \\
G_C' &\leftarrow G_C'\cdot N.
\end{align*}
\item\label{7} Choose uniformly at random $V,V' \in_r \mathbb{F}_2^{|G_C| \times |G_B|}$ and a $U \in_r GL(\mathbb{F}_2, |G_C|)$.
\item\label{8} Redefine
\begin{align*}
G_C^T &\leftarrow U(VG_B^T + G_C^T) \\
{G_C'}^T &\leftarrow {(U^{-1})}^T(V'{G_B'}^T + {G_C'}^T) \\
\end{align*}
Note that this does not change the commutativity matrix since $$U(VG_B^T + G_C^T) B (G_C'+G'_B{V'}^T)U^{-1} = I_{|G_C|}.$$
\item\label{9} Let $G_B = \{g_1, \ldots, g_{|G_B|}\}$ and $G_B' = \{g_1',\ldots, g_{|G'_B|}\}$.  For each $g_i \in G_B$, choose $\overline{g_i}$ satisfying
\begin{align*}
[\overline{g_i},G_A] &= 0 \\
[\overline{g_i},G_C] &= 0 \\
[\overline{g_i},G_C'] &= 0 \\
[\overline{g_i},\{g_{i+1},\ldots,g_{|G_B|}\}] &= 0 \\
[\overline{g_i}, \{g_{i+1}', \ldots, g_{|G_B|}'\}] &= 0 \\
[\overline{g_i}, \{\overline{g_{1}}, \ldots, \overline{g_{i-1}}\}] &= 0 \\
\{\overline{g_i},g_i\} &= 0 \\
\{\overline{g_i},g_i'\} &= 0.
\end{align*}
To see that such a choice of $\overline{g_i}$ always exists, note that it must satisfy at most $2n$ affine linear equations, all of which are linearly independent, in a space of dimension $2n$.

Now that we have prepared the generator matrices, we will step-by-step map between adjacent codes transversally.

\subsection{Applying the transformation}

\item\label{10} For $1 \leq i \leq |G_B|$ indexing the elements of $G_B$, perform the transformation $g_i \mapsto \overline{g_i}$.  Note that the resulting stabilizer codes are adjacent, and so the preceding discussion gives a transversal circuit for each mapping.

\item\label{11} For $1 \leq i \leq |G_C|$ indexing the elements of $G_C$, perform the transformation $g_i \mapsto g_i'$.  Again, since the codes are adjacent, the mapping can be done transversally.

\item\label{12} For $1 \leq i \leq |G_B|$ indexing the elements of $G_B'$, perform the transformation $\overline{g_i} \mapsto g_i'$ starting from $i = |G_B|$ and working backwards towards $i=1$.  Again, we have a transversal circuit for each mapping.

\item\label{13} Discard the ancilla.
\end{enumerate}

This randomized variant differs from the original SRA in several ways.  First, there is the introduction of ancilla, which we will see are vital for preserving the distance.  Next, the SRA fixes the generating sets $G,G'$ subject to the same $G_A$ and $G_C$ conditions, but with different $G_B$ conditions.  Namely, the SRA fixes the $\overline{g}$ to be the product of the complementary logical operators to those operators in $G_B$ and $G_B'$, which can be seen as nontrivial logical
operators on the opposite code.  This allows for a certain degree of freedom in choosing the order in which one converts between the two codes, but restricts the $G_C$,$G_C'$ that are available to use.  Also in the SRA, only the set of valid permutations among $G_B$ and $G_C$ are considered, which restricts the search for a distance-preserving mapping. In the rSRA, we consider the full set of invertible transformations on $G_C$ for a better chance of success.  Finally, the transformation described above is \emph{symmetric} in the sense that switching from $G$ to $G'$ or $G'$ to $G$ after step~\ref{9} results in the same set of intermediate codes.  We will see that this simplifies the set of errors we must consider.

\section{Distance-preservation for small codes}
\label{sec:examples}

We have now described a way of constructing a transversal circuit mapping information encoded in $G$ to information encoded in $G'$ through the use of Shor-style measurement (see Appendix \ref{Shor}).

However, we have no \emph{a priori} guarantee that these intermediate codes, resulting from the sequence of deformations, will themselves be error-correcting.  In light of this, we offer several examples of small distance-preserving circuits generated from the rSRA.  These illustrate the necessity of the aforementioned modifications, which are centered around choosing a path so that all of the intermediate codes have high distance.  In these examples, the extremal codes all have distance $3$, and so we call the circuit distance-preserving if the intermediate codes all have distance $\geq 3$.
\subsection{$[[7,1,3]] \longleftrightarrow [[5,1,3]]$}

With $m=0$, one can generate a distance-preserving map from the $[[7,1,3]]$ Steane code to the perfect $[[5,1,3]]$ code using the rSRA with $17$ multi-qubit gates.  An optimal fault-tolerant (and so distance-preserving) transformation using $CZ$ gates between these two codes was found via brute force search in \cite{Hill:2013} and involves $14$ multi-qubit gates.  The circuit output by the rSRA requires no overhead in data qubits compared to the three extra qubits required in \cite{Hill:2013}.  However, because the $[[5,1,3]]$ code is perfect, any conversion without ancilla must only be able to protect against erasure, for reasons detailed in Section~\ref{sec:discussion}.  Note also that there must be conversions with large separation between the circuit provided by the rSRA and the optimal fault-tolerant circuit, in particular when $G$ and $G'$ are locally unitarily equivalent.
\\
\begin{table}[htb]
 \centerline{\begin{tabular}{|c|c|c|}
    \hline
   Type & $[[7,1,3]]$ & $[[5,1,3]]$\\
    \hline\hline
  $G_A$ & $-YXXYIZZ$ & $-YXXYIZZ$\\
  \hline
  \multirow{5}{*}{$G_C$} & $ZZZZIII$ & $IXZZXII$ \\
  \cline{2-3}
  & $-YYXXZZI$ & $XZZXIII$\\
  \cline{2-3}
  & $-IXZYYZX$ & $XIXZZII$ \\
  \cline{2-3}
  & $-XIYZYZX$ & $ZXIXZII$ \\
  \cline{2-3}
  & $-ZYYZIXX$ & $IIIIIZI$ \\
  \hline
  \end{tabular}}
  \caption{The generator matrices defining a distance-preserving conversion, proceeding from top to bottom.  We follow steps~\ref{10}~-~\ref{13} of the algorithm.}
\end{table}
\subsection{$(34) \cdot [[7,1,3]] \longleftrightarrow [[9,1,3]]$}
With $m = 0$, one can convert from the $(34)$ permutation of the $[[7,1,3]]$ Steane code to Shor's $[[9,1,3]]$ code while preserving the distance.  However, for the standard choice of generator matrices, no permutation on the ordering of the deformations will suffice.  Thus, we must choose $U \in GL(\mathbb{F}_2, |G_C|)$ rather than restricting $U$ to be a permutation matrix.  A choice of generator matrices for which the circuit is distance-preserving is presented below.
\begin{table}[h]
 \centerline{\begin{tabular}{|c|c|c|}
    \hline
   Type & $(34) \cdot [[7,1,3]]$ & $[[9,1,3]]$\\
    \hline\hline
  \multirow{2}{*}{$G_A$} & $ZZIIZZIII$ & $ZZIIZZIII$\\
  \cline{2-3}
   & $IIIIIIIZZ$ & $IIIIIIIZZ$ \\
  \hline
  $G_C$ & $YIIYYIYII$ & $-YXYZZIXXX$ \\
  \hline
 $G_B$ & $ZZZZIIIZI$ & $ZZIIIIZZI$ \\
  \hline
  \multirow{4}{*}{$G_C$}& $-ZZYYXXIII$ & $IZZZZIIII$\\
  \cline{2-3}
  & $ZIIZZIZII$ & $-YYXXXXIII$ \\
  \cline{2-3}
  & $-XZZXYIYII$ & $-XYYIIIXXX$ \\
  \cline{2-3}
  & $-IYZXZXYII$ & $IIIZZIIII$ \\
  \hline
  \end{tabular}}
  \caption{The conversion proceeds from top to bottom.  As the $G_B$ elements commute, we perform an intermediate conversion to the product of the complementary logical operators, which in this case are $XXXXXXXXX$ and $XXXXXXXII$ respectively.  This small modification is similar to the SRA \cite{Colladay:2017}, which we adopt here for ease of presentation.}
 \end{table}
\subsection{ $[[7,1,3]] \longleftrightarrow (34)\cdot[[7,1,3]]$}
For $m=0$, it was observed in \cite{Colladay:2017} that one cannot use the SRA to convert between the $[[7,1,3]]$ code, and the $(34)$ permutation of the $[[7,1,3]]$ code while preserving the distance.  In fact, there does not exist a $ U \in GL(\mathbb{F}_2,|G_C|)$ that allows the intermediate codes to be error-correcting.  In contrast, with $m = 2$, there does exist such a distance-preserving circuit, emphasizing the need for ancilla.  Moreover, brute force search shows that this is the minimal number of ancilla required to produce a distance-preserving circuit within this framework. However, note that qubit permutations are themselves automatically fault-tolerant by simply relabeling the wires, rather than applying a fault-tolerant physical SWAP gate.
\begin{table}[h]
 \centerline{\begin{tabular}{|c|c|c|}
    \hline
   Type & $[[7,1,3]]$ & $(34) \cdot [[7,1,3]]$\\
    \hline\hline
  \multirow{4}{*}{$G_A$} & $XXIIXXIII$ & $XXIIXXIII$\\
  \cline{2-3}
   & $ZZZZIIIII$ & $ZZZZIIIII$ \\
   \cline{2-3}
   & $ZZIIZZIII$ & $ZZIIZZIII$ \\
   \cline{2-3}
   & $XXXXIIIII$ & $XXXXIIIII$\\
  \hline
  \multirow{4}{*}{$G_C$}& $XIXIXIXII$ & $YIIYYIYXX$\\
  \cline{2-3}
  & $IIIIIIIZZ$ & $XIIXXIXIX$ \\
  \cline{2-3}
  & $ZIZIZIZIZ$ & $IIIIIIIXX$ \\
  \cline{2-3}
  & $YIYIYIYZZ$ & $XIIXXIXXX$ \\
  \hline
  \end{tabular}}
  \caption{The conversion proceeds from top to bottom.  In particular, we use $2$ extra ancilla qubits, for $9$ physical qubits in total.}
 \end{table}
\section{Distance bounds}
\label{sec:proof}

We now show that, with low overhead and high probability, the described rSRA will yield a distance-preserving circuit.  More specifically, we show that the intermediate codes preserve the distance of the extremal codes.

\begin{Theorem}\label{thm:1}
Let $S$,$S'$ be any two stabilizer codes with parameters $[[n_1,k,d_1]]$ and $[[n_2,k,d_2]]$, respectively.  Let $d = \min\{d_1, d_2\}$ and $n = \max\{n_1,n_2\}$. Then, the rSRA will output a distance-preserving circuit mapping information encoded in $S$ to information encoded in $S'$ with probability $1-\epsilon$ using $m=O(d\log\frac{n}{d}+\log \frac1\varepsilon) $ ancilla qubits.
\end{Theorem}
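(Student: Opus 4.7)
My plan is a union bound over the intermediate codes of the rSRA and the low-weight Paulis on $n+m$ qubits. The rSRA produces $O(n+m)$ intermediate codes, one per mapping in steps~\ref{10}--\ref{12}, and an intermediate code fails distance-preservation iff some Pauli $P$ with $|P|<d$ lies in $\mathcal{N}(\hat{S}_{\mathrm{int}})\setminus \hat{S}_{\mathrm{int}}$. The number of such low-weight Paulis is at most $\sum_{w<d}\binom{n+m}{w}3^w = O((n+m)/d)^d$. I will show that for any fixed intermediate code and any fixed low-weight $P$, the probability over the random $V,V',U$ that $P$ is problematic is at most $2^{-m}$; the union bound then gives total failure $\leq (n+m)\cdot O((n+m)/d)^d \cdot 2^{-m}$, and setting this $\leq \varepsilon$ yields $m=O(d\log(n/d)+\log(1/\varepsilon))$, which is self-consistent when $m=O(n)$.

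The per-pair bound rests on the syndrome transformation from step~\ref{8}: $Syn_{G_{C,\mathrm{new}}}(P)=U\bigl(V\cdot Syn_{G_B}(P)+Syn_{G_{C,\mathrm{old}}}(P)\bigr)$, and analogously for $G_{C'}$ using $V'$. In the generic case $Syn_{G_B}(P)\neq 0$, the uniform $V$ makes $V\cdot Syn_{G_B}(P)$ uniform in $\mathbb{F}_2^{|G_C|}$, so the probability that $P$ commutes with all of $G_{C,\mathrm{new}}$ is $2^{-|G_C|}\leq 2^{-m}$ by Lemma~\ref{lem:2}. This handles the step~\ref{10} intermediate codes directly. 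At a step~\ref{11} intermediate code, which couples partial subsets of new $G_C$ and new $G_{C'}$, the independence of $V$ and $V'$ yields a product probability $2^{-(|G_C|-j)}\cdot 2^{-j}=2^{-|G_C|}$ when both $Syn_{G_B}(P)$ and $Syn_{G_{B'}}(P)$ are nonzero; in asymmetric subcases, the uniform invertible $U$ supplies the missing factor because $U^T$ maps a fixed coordinate subspace to a uniformly random $(|G_C|-j)$-dimensional subspace of $\mathbb{F}_2^{|G_C|}$.

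In the complementary degenerate case $Syn_{G_B}(P)=Syn_{G_{C,\mathrm{old}}}(P)=0$, we have $P\in\mathcal{N}(\hat{S})$, and since $\hat{S}$ inherits distance $\geq d$ from $S$, $|P|<d$ forces $P\in\hat{S}$. Writing $P=p_A p_B p_C$ in the $\hat{S}$-decomposition $\langle G_A,G_B,G_C\rangle$ with $p_B=\prod_l g_l^{a_l}$, the triangular commutation structure from step~\ref{9} ($\overline{g_l}$ anticommutes with $g_l$ and commutes with $g_i$ for $i>l$) lets a forward induction on $l$ show that any $P$ commuting with $\overline{g_1},\ldots,\overline{g_j}$ must have $a_1=\cdots=a_j=0$; hence $P\in\langle G_A,g_{j+1},\ldots,g_{|G_B|},G_C\rangle$, which is contained in every step~\ref{10}--\ref{11} intermediate stabilizer, so $P$ is automatically in the stabilizer and is not a problem. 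The dual argument in $\hat{S}'$ covers the symmetric degenerate case, and the symmetry noted after step~\ref{9} reduces step~\ref{12} analysis to that of step~\ref{10}.

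The main obstacle will be the step~\ref{11} bookkeeping, where the two partial-commutation conditions are coupled through the single random $U$ while the independent randomness lives in $V,V'$; carefully combining these across all subcases of $(Syn_{G_B}(P),Syn_{G_{B'}}(P),Syn_{G_C}(P),Syn_{G_{C'}}(P))$ to obtain the uniform $2^{-|G_C|}$ bound is the delicate piece. The preserved symplectic pairing $G_C^T B G_{C'}=I$ from step~\ref{8} makes the structure essentially symmetric in $G$ and $G'$ and keeps this tractable.
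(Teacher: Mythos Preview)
Your overall architecture matches the paper: union bound over low-weight Paulis, a per-error failure probability of order $\mathrm{poly}(|G_C|)\cdot 2^{-|G_C|}$, and $|G_C|\ge m$ from Lemma~\ref{lem:2} to finish. One structural difference is that you union bound over intermediate codes as well, whereas the paper bounds the single event ``$e$ is undetectable at \emph{some} intermediate code'' directly; this costs you only a polynomial factor and is harmless for the asymptotic.

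There is, however, a genuine gap exactly where you flag ``the delicate piece.'' In the subcase $Syn_{G_B}(P)=Syn_{G_{B'}}(P)=0$ with both $Syn_{G_C^{(0)}}(P)$ and $Syn_{G_{C'}^{(0)}}(P)$ nonzero, the vectors $v$ and $w$ are \emph{deterministic}, so all the randomness lives in the single $U$. Your heuristic ``$U^T$ sends a fixed coordinate subspace to a uniformly random subspace'' only controls the \emph{marginal} laws of $Uv$ and of $(U^{-1})^T w$ separately; it says nothing about their joint distribution, and these two vectors are coupled through $U$. The paper resolves this with Lemma~\ref{lem:1}: the action $U\cdot(v,w)=(Uv,(U^{-1})^T w)$ of $GL(\mathbb{F}_2,|G_C|)$ is transitive on $\{(v,w):v,w\neq 0,\ \langle v,w\rangle=0\}$, so the pair is \emph{jointly} uniform on that set, and a direct count gives $\Pr[i_0<i_1]\le(|G_C|-1)2^{-|G_C|}$. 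Without this transitivity observation (or an equivalent joint bound), your per-intermediate-code product $2^{-(|G_C|-j)}\cdot 2^{-j}$ is not justified in this subcase.

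A second, smaller gap: your degenerate-case claim that $\langle G_A,\,g_{j+1},\ldots,g_{|G_B|},\,G_C\rangle$ is contained in every step~\ref{11} intermediate stabilizer is false, since during step~\ref{11} the elements $g_1,\ldots,g_l\in G_C$ have already been replaced by $g_1',\ldots,g_l'$. The paper's cases (1)--(3) handle $P\in\hat S$ differently: they show $P$ is \emph{detectable} (anticommutes with some surviving generator) rather than in the stabilizer. Your membership route can be rescued using the commutativity matrix $G_C^T B G_{C'}=I$: if $P=p_A p_C$ commutes with $g_1',\ldots,g_l'$ then the first $l$ $G_C$-coefficients of $p_C$ vanish and $P\in\langle G_A,g_{l+1},\ldots,g_{|G_C|}\rangle\subset\hat S_{\mathrm{int}}$; but that extra step is needed and is essentially the paper's case-(2) mechanism in disguise.
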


\begin{proof}
    Consider a particular error $e: |e|<d$. There are four different types of errors to consider. \vspace{0.3cm}
    
    \item[$(1)$] \underline{$e \in S \cap S'$}:  In this case, $e \in Span(G_A)$, and so remains passively corrected throughout the transformation. \\
    \item[$(2)$] \underline{$e \in S \setminus \mathcal{N}(S')$}:  In this case, we can decompose $e = g_A + g_B + g_C$ where $g_A \in Span(G_A), g_B \in Span(G_B),$ and $g_C \in Span(G_C)$.  Furthermore, $g_C \neq 0$, or else $e$ would be a logical operator of weight $<d$ for $S'$.  Thus, $e$ must be detected by $G_C'$, and so it remains detectable after step \ref{11}.  In particular, before the end of step \ref{11}, $e$ must fall out of the intermediate stabilizer group.  Suppose this occurs for the first time when transforming between two adjacent codes whose stabilizer groups differ by $g,g'$.  Then we can write $e = g + \sum_i a_ig_i$, and as $g'$ commutes with all other $g_i$, it must be that $\{e,g'\} = 0$.  Since $g'$ remains in each intermediate code up through step \ref{11}, $e$ must be detectable throughout. \\
    \item[$(3)$] \underline{$e \in S' \setminus \mathcal{N}(S)$}:  This error is just an error of type $(2)$ when performing the opposite transformation from $S'$ to $S$.  By symmetry of the scheme, the set of intermediate codes during this opposite transformation is the same, and so these errors remain detectable by the preceding argument. \\
    \item[$(4)$] \underline{$e \not \in  \mathcal{N}(S) \cup  \mathcal{N}(S')$}:  Let $G_C^{(0)}, G_{C}'^{(0)}$ be the bases $G_C$ and $G'_C$ we choose after step \ref{6} in the rSRA scheme, and let $G_C^{(1)}, G_C'^{(1)}$ be the bases we choose after step \ref{8}. Note that the syndrome map for $G_C^{(1)}$ can then be expressed as 
$$Syn_{G_C^{(1)}}(e)=U(V\cdot Syn_{G_B}(e)+Syn_{G_C^{(0)}}(e)).$$  In this case it must be that 
   \begin{align*} (Syn_{G_A}(e)| Syn_{G_B}(e)| Syn_{G_C^{(0)}}(e))^T &\neq 0, \\
        (Syn_{G_A}(e)| Syn_{G'_B}(e)| Syn_{{G'}_C^{(0)}}(e))^T&\neq 0. \end{align*}
    Note that if $Syn_{G_A}(e)\neq 0$, then $e$ is always detectable since each intermediate code includes the check operators from $G_A$.  Thus, we only need to consider the case where
    $Syn_{G_A}(e)=0$, and so we can assume that $(Syn_{G_B}(e)| Syn_{G_C^{(0)}}(e))^T\neq 0$ and $(Syn_{G'_B}(e)| Syn_{{G'}_C^{(0)}}(e))^T\neq 0$. 

    Let $P_e$ denote the probability that the error $e$ is undetectable in some intermediate code over the random choices of $U$, $V$, and $V'$. We divide $P_e$ into three parts. Let $A_e$ denote the event that $Syn_{G_C^{(1)}}(e)=0$, $B_e$ the event that $Syn_{{G'}_C^{(1)}}(e)=0$, and let $C_e$ denote the event that both $Syn_{G_C^{(1)}}(e)$ and $Syn_{{G_C}'^{(1)}}(e)$ are nonzero, yet $e$ becomes undetectable on some intermediate code during the transformation. Then $P_e\leq \Pr[A_e]+\Pr[B_e]+\Pr[C_e]$. We bound $\Pr[A_e]$, $\Pr[B_e]$, and $\Pr[C_e]$ separately. To bound $\Pr[A_e]$, note that $$Syn_{G_C^{(1)}}(e)=U(V\cdot Syn_{G_B}(e)+Syn_{G_C^{(0)}}(e)).$$ Since $U\in GL(\mathbb{F}_2,n-k)$, $A_e$ occurs if and only if $V\cdot Syn_{G_B}(e)+Syn_{G_C^{(0)}}(e)=0$. If $Syn_{G_B}(e)=0$, it must be the case that $Syn_{G_C^{(0)}}(e)\neq 0$, and so $Syn_{G_C^{(1)}}(e)\neq 0$; otherwise $Syn_{G_B}(e)\neq0$ and $V\cdot Syn_{G_B}(e)+Syn_{G_C^{(0)}}(e)$ is uniformly random over $\{0,1\}^{|G_C|}$.  In either case, we have
            $$\Pr[A_e] \leq 2^{-|G_C|}.$$
            Repeating the same argument shows that $\Pr[B_e]\leq 2^{-|G_C|}$ as well.  To bound $\Pr[C_e]$, define
\begin{align*}
v&=V\cdot Syn_{G_B}(e)+Syn_{G_C^{(0)}}(e), \\ w&=V'\cdot Syn_{G'_B}(e)+Syn_{{G'}_C^{(0)}}(e).
\end{align*}
Since $Uv, (U^{-1})^Tw\neq 0$, $e$ will be detectable during steps \ref{10} and \ref{12}, and so $C_e$ occurs only if $e$ becomes undetectable during step \ref{11}.  Specifically, it must be that $Syn_{G_A}(e)=0, Syn_{\overline{G}_B}(e)=0$, and the last $1$ in the vector $Uv$ occurs before the first $1$ in the vector $(U^{-1})^Tw$.  This is because we are sequentially replacing the check operators of $G$ with the check operators of $G'$, and so an error becomes undetectable for some intermediate code only if we produce some zero syndrome during this sequence of substitutions. By Lemma~\ref{lem:1}, for two nonzero vectors $v,w\in\{0,1\}^{|G_C|}$, the probability that the last $1$ in $Uv$ comes before the first $1$ in $(U^{-1})^Tw$ is bounded by $(|G_C|-1)\cdot 2^{-|G_C|}$.

 Summing these three terms, we have $P_e\leq (|G_C|+1)\cdot 2^{-|G_C|}$.  Taking a union bound, the probability $P$ that any of the intermediate codes fail to detect any error of weight less than $d$ is upper bounded by $$P \leq \sum_{e:|e|< d}P_e \leq |\{e:|e|< d\}|\cdot (|G_C|+1)\cdot 2^{-|G_C|}.$$  Taking a Chernoff bound, we get that this is in turn upper bounded as $$P \leq 4^{n+m}\cdot e^{-D(\frac{d-1}{n+m}||\frac{3}{4})(n+m)}\cdot  (|G_C|+1)\cdot 2^{-|G_C|}$$ where $D(\cdot||\cdot)$ is the KL-divergence.  By the quantum singleton bound, we can assume $\frac{d-1}{n+m}<\frac{d-1}{n}<\frac{3}{4}$. Furthermore, by Lemma \ref{lem:2}, $|G_C|$ is given by rank$(G^TBG')$, which is at least $m$.  So the probability of failure can be further upper bounded by
    $$ P\leq 4^{n+m}\cdot e^{-D(\frac{d-1}{n+m}\Vert\frac{3}{4})(n+m)}\cdot  (m+1)\cdot 2^{-m}.$$
    It suffices to choose $m$ such that the above quantity is upper bounded by $\epsilon$ in order to achieve a high probability of success. In particular, the case $\varepsilon=1$ upper bounds the minimum number of ancilla qubits required for a fault-tolerant transformation. By Lemma~\ref{lem:3} we observe that taking
    $$m=O(d\log\frac{n}{d}+\log\frac1\varepsilon)$$
    is sufficient for the rSRA scheme to succeed with probability $1-\epsilon$.

\end{proof}

\section{Discussion}
\label{sec:discussion}
Theorem \ref{thm:1} shows that with high probability, the rSRA will produce a transversal circuit with intermediate codes that have distances \emph{at least} the minimum of the distances of the extremal codes.  It is important to note that this does \emph{not} necessarily imply fault-tolerance.  The reason is because, when measuring $g'$, the randomness in the outcome prevents us from using that syndrome bit during error-correction. More specifically, consider the following two scenarios.
\begin{enumerate}
\item We project onto the $(+1)$-eigenspace of $g'$.
\item We project onto the $(-1)$-eigenspace of $g'$ and simultaneously experience an error that anticommutes with only $g'$. 
\end{enumerate}

Then we cannot distinguish these two scenarios using only our syndrome bits, and so cannot correct the resulting error.  More generally, we can cast the property required for fault-tolerance in terms of subsystem codes.  For every conversion between adjacent codes, we consider the subsystem code with a single gauge degree of freedom corresponding to gauge operators $g'$ and $g$.  Then the resulting conversion will be $t$-fault-tolerant precisely when the resulting subsystem code has distance $2t+1$.  This is because the redundant syndrome information can diagnose errors without the syndrome bit associated to $g'$, and so ensure that we project onto the correct eigenspace.  For this reason, additional techniques may be required to achieve fault-tolerance using the rSRA, such as error-detection on the ancilla.  We leave this to future work.

These techniques contrast with recent results from \cite{Yoder:2016}, where it was shown that pieceable fault-tolerance offers generic \emph{fault-tolerant} code switching between stabilizer codes subject to certain constraints.  However, their techniques require that the codes are nondegenerate and have some set of native fault-tolerant Clifford gates, allowing a fault-tolerant SWAP gate \emph{between} different codes.  One could also consider preparing a second code state and using logical teleportation to achieve a fault-tolerant mapping \cite{Brun:2015}.

Practically, on small examples, one finds that often \emph{no} ancilla qubits are required to find a distance-preserving circuit, which is desirable as the resulting circuit may then be fault-tolerant.  In general, this can be attributed to a coarse accounting of $|G_C|$ in terms of the number $m$ of ancilla qubits.  In most cases, $N(S) \cap N(S')$ will be small, and so the ancilla will be superfluous.  

Moreover, the multi-qubit gate complexity of the algorithm is $\sum_{P \in \{\overline{g_i}\} \cup G_B \cup G_C} |P|$, so that choosing a low weight generating set is ideal for reducing the complexity of the code switching circuit.  For this reason, LDPC codes might provide more efficient code switching circuits, although preserving the distance may depend on choosing a high weight set of generators.

This algorithm derives its usefulness from its generality.  For specific code switching examples, it may be profitable to modify the circuit using the rSRA as a template, augmented with a larger class of fault-tolerant manipulations such as local Clifford gates, in order to search for a \emph{fault-tolerant} mapping.  For large code sizes, the use of high-weight Shor-style measurements is limiting as it requires large verified CAT states.  Thus, this technique may be most useful as a step in a concatenated scheme, or simply as a search ansatz.

One subtlety about the rSRA is that, while it outputs a distance-preserving circuit switching between two codes with high probability, this is difficult to check.  This follows from the difficulty of computing the minimum distance of a generic error-correcting code, which is an NP-hard problem in general \cite{Dumer:2003}.  Indeed, even when restricting to a particular distance, this check remains extremely costly.  This poses a barrier to derandomizing the algorithm, which would be one desirable avenue for future improvement.  

Another such improvement would be to minimize overhead.  One could imagine taking a random local clifford transformation in order to increase the size of $G_C$, rather than introducing ancilla.  Such a strategy would be interesting since locally equivalent codes have nearly identical properties.  Of course, modifying the algorithm to ensure fault-tolerance is the most important improvement.  

If it is true that one can always choose locally equivalent representatives for which the rSRA provides a distance-preserving conversion without ancilla, this would suggest that all error-protected information in stabilizer codes is, in some sense, ``transversally equivalent''.  This contrasts with the diverse set of equivalence classes of locally unitarily equivalent codes, which can be identified as distinct submanifolds of Grassmanians.  Indeed, it may be of conceptual interest to interpret these upper-bounds in a broader framework of fault-tolerance, such as the one investigated in \cite{Gottesman:2013}.  

Similarly, the generality of the rSRA provides an aesthetically nice interpretation of error-protected information.  It suggests that, with the addition of some minimal overhead, any stabilizer error-protected encoding of information is indeed ``transversally equivalent'' to any other. 
\section{Acknowledgments}
The authors would like to thank Ted Yoder for pointing out a critical oversight in an earlier version of this draft.  They would also like to thank Rui Chao, Fang Zhang, Kevin Sung, Daniel Minahan, Yaoyun Shi, and anonymous QIP referees for their valuable comments and suggestions.  This research was supported in part by NSF grant 1717523.

\appendix

\section{Technical lemmas}

\begin{Lemma}\label{lem:1}
    Let $v,w\in\{0,1\}^n\setminus\{0\}$ and $U\in_r GL(\mathbb{F}_2, n)$. Let $i_0=\max\{i:(U\cdot v)_i=1\}$ and $i_1=\min\{i:((U^{-1})^T\cdot w)_i=1\}$. Then,
    $$\Pr[i_0<i_1]\leq (n-1)\cdot 2^{-n}.$$
\end{Lemma}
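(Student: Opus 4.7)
The plan is to first determine the joint distribution of the pair $(a,b):=(Uv,(U^{-1})^T w)$ under the random choice of $U$, and then reduce the lemma to a direct combinatorial count.  The key algebraic observation, which drives everything, is the identity
$$a^T b \;=\; (Uv)^T (U^{-1})^T w \;=\; v^T\!\bigl(U^{-1}U\bigr)^T w \;=\; v^T w,$$
so $a^T b$ equals the deterministic constant $c:=v^T w$ for every $U$.  Hence $(a,b)$ is supported entirely on $\Omega_c:=\{(x,y)\in(\mathbb{F}_2^n\setminus\{0\})^2:x^T y=c\}$.

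Next I would show that $(a,b)$ is in fact \emph{uniform} on $\Omega_c$.  Consider the action of $GL(\mathbb{F}_2,n)$ on $\Omega_c$ given by $U\cdot(x,y)=(Ux,(U^{-1})^T y)$; the identity above shows this is well-defined.  I claim the action is transitive.  Given any target $(a,b)\in\Omega_c$, one builds a witness $U$ by prescribing its first column to be $a$ and a chosen row of $U^{-1}$ (the first row if $c=1$, the second if $c=0$) to be $b^T$; the remaining columns of $U$ are then picked to be linearly independent and to sit in the appropriate hyperplane $b^\perp$, which is possible because $a\notin b^\perp$ when $c=1$ and $a\in b^\perp\setminus\{0\}$ when $c=0$.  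By orbit--stabilizer, each element of $\Omega_c$ is hit by the same number of $U$'s, so $(a,b)$ is uniform on $\Omega_c$.  I expect this transitivity argument to be the main technical obstacle, but it is a short linear-algebra construction.

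With uniformity in hand, the rest is counting.  Case $c=1$: the event $i_0<i_1$ forces the supports of $a$ and $b$ to be disjoint, hence $a^T b=0\ne 1$; so the probability is $0$ and the bound holds trivially.  Case $c=0$: decomposing by $(j,k)=(i_0,i_1)$ with $j<k$, the number of valid $a$'s is $2^{j-1}$ (free on positions $<j$, equal to $1$ at position $j$, zero after) and the number of valid $b$'s is $2^{n-k}$, so the favorable count is
$$\sum_{1\le j<k\le n} 2^{j-1}\cdot 2^{n-k} \;=\; (n-2)\,2^{n-1}+1,$$
while $|\Omega_0|=(2^n-1)(2^{n-1}-1)$.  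It remains to verify the elementary inequality
$$\frac{(n-2)\,2^{n-1}+1}{(2^n-1)(2^{n-1}-1)}\;\le\;(n-1)\cdot 2^{-n},$$
which after clearing denominators reduces to $2^{n-1}\ge 1+\tfrac32(n-1)-(n-1)2^{-n}$, true for all relevant $n$ by a one-line induction.  This completes the proposed plan.
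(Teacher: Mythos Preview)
Your proposal is correct and follows essentially the same argument as the paper: both use the invariant $a^Tb=v^Tw$ to dispose of the case $\langle v,w\rangle=1$, prove transitivity of the $GL(\mathbb{F}_2,n)$-action on $\Omega_0$ to obtain uniformity of $(Uv,(U^{-1})^Tw)$, and arrive at the identical count $\frac{(n-2)\,2^{n-1}+1}{(2^n-1)(2^{n-1}-1)}$. The only cosmetic difference is the transitivity witness---the paper builds $U$ by extending $v$ to a basis of $w^{\perp}$ and then to all of $\mathbb{F}_2^n$, which is slightly tidier than prescribing a row of $U^{-1}$ (note that in your $c=0$ case one of the ``remaining'' columns must lie \emph{outside} $b^{\perp}$, so your sketch needs a one-word fix there).
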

\begin{proof}
    Let $\langle \cdot,\cdot\rangle$ be the dot product over $\mathbb{F}_2$. Note that $\langle v,w\rangle = \langle U\cdot v, (U^{-1})^T\cdot w\rangle$. If $\langle v,w\rangle =1$, then there must be at least one entry where both $U\cdot v$ and $(U^{-1})^{T}\cdot w$ are $1$ for whichever $U$ we choose, and so $\Pr[i_0<i_1]=0$. Therefore we only need to consider the case in which $\langle v,w\rangle=0$.

    Consider the action of $GL(\mathbb{F}_2, n)$ on $A$ defined by $U (v,w)\rightarrow (U\cdot v, (U^{-1})^T\cdot w)$, where $A=\{(v,w)|v,w\in \{0,1\}^n\setminus\{0\}, \langle v,w\rangle = 0\}$. We show that the action is transitive by showing that for all such pairs $(v,w)$, there always exists a $U$ sending $(e_1,e_n)$ to $(v,w)$, where $e_1, e_n$ are $(1,0,\ldots, 0)$ and $(0,0,0,\ldots, 1)$, respectively. Given such a $(v,w)$, first extend $v$ to a basis for $w^{\bot}$, say $(u_1=v,u_2,\ldots,
    u_{n-1})$, and then extend it to the whole space by adding in $u_n$. We claim that $U=(u_1,\cdots, u_n)$ is the desired matrix. It is sufficient to show that the last column $w'$ of $(U^{-1})^T$  is exactly $w$. We have $U^T w' = e_n$ given that $U^T(U^{-1})^T=I$, and that $U^Tw=e_n$ by construction of $U$. Then, since $U$ is invertible, $w=w'$.

    A uniformly random distribution over invertible $U$ then induces a uniformly random distribution over $A$. Then $\Pr[i_0<i_1]$ can then be bounded by counting the number of such pairs in $A$:
    \begin{align*}
        \Pr[i_0<i_1] &=\frac{\sum_{i_0=1}^n 2^{i_0-1}(2^{n-i_0}-1)}{(2^n-1)(2^{n-1}-1)}\\
        &=\frac{(n-2)2^{n-1}+1}{(2^n-1)(2^{n-1}-1)}\\
        &\leq (n-1)\cdot2^{-n}
    \end{align*}
    when $n\geq 2$. Note that $|A|=0$ when $n=1$, so $\Pr[i_0<i_1]\leq (n-1)\cdot 2^{-n}$ holds for all $n\geq 0$.
\end{proof}
\begin{Lemma}
    Let $G_A,G_B,G_C$ and $G_A,G'_B,G'_C$ be the matrices defined up to step \ref{5} in the rSRA scheme. The commutativity matrix $H=G_C^TBG_C'$ is invertible, and its dimension is $|G_C|$, with $|G_C|\geq m$.
    \label{lem:2}
\end{Lemma}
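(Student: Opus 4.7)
The plan is to establish three facts in order: first, that $|G_C|=|G'_C|$ so that $H$ is square; second, that $H$ is nonsingular; and third, that $|G_C|\geq m$. I will work entirely in the symplectic-vector-space picture, where the commutativity matrix encodes symplectic inner products.

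For squareness and nonsingularity, I would introduce the map $\phi: \hat{S} \to \mathbb{F}_2^{|G'_C|}$ sending $g$ to the vector of symplectic inner products of $g$ with the generators in $G'_C$. Every element of $\hat{S}$ automatically commutes with $G_A$ (since $G_A\subseteq \hat{S}\cap\hat{S}'$ and $\hat{S}$ is abelian) and with $G'_B$ (since by construction $G'_B\subseteq \mathcal{N}(\hat{S})\cap\hat{S}'$). Hence the only nontrivial commutation data between $\hat{S}$ and a basis of $\hat{S}'$ lives in $\phi$, and $\ker\phi = \hat{S}\cap\mathcal{N}(\hat{S}') = \mathrm{span}(G_A\cup G_B)$. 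Consequently $\phi$ descends to an injection $\mathrm{span}(G_C)\hookrightarrow \mathbb{F}_2^{|G'_C|}$, giving $|G_C|\leq |G'_C|$; the symmetric argument (using commutation with $G_C$ applied to elements of $\hat{S}'$) gives the reverse inequality. Thus $H$ is square, and since it represents this injection in the chosen bases, it is invertible.

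For the lower bound $|G_C|\geq m$ I would exploit the specific structure of the appended ancillas. By construction $\hat{S}$ contains the $m$ single-qubit stabilizers $Z_1,\ldots,Z_m$ acting on the ancilla register, while $\hat{S}'$ contains the complementary $X_1,\ldots,X_m$. Suppose $\sum_i \alpha_i Z_i \in \ker\phi = \hat{S}\cap\mathcal{N}(\hat{S}')$; then it must commute with each $X_j\in\hat{S}'$, which forces $\alpha_j=0$ for every $j$. Hence the images $\phi(Z_1),\ldots,\phi(Z_m)$ are linearly independent in $\mathrm{span}(G_C)$, so $|G_C|\geq m$.

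The main obstacle is not a deep one; the argument is essentially bookkeeping in symplectic duality. The one subtle step is verifying that all of the automatic commutation (of $\hat{S}$ with $G_A\cup G'_B$, and symmetrically) really does reduce the problem to the interaction between $G_C$ and $G'_C$, so that the square matrix $H=G_C^T B G'_C$ faithfully represents the injection $\phi$ and therefore inherits its invertibility.
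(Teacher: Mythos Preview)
Your argument is correct and follows essentially the same route as the paper's: both observe that elements of $\hat{S}$ automatically commute with $G_A$ and $G'_B$, conclude that $H$ faithfully records the only remaining commutation data (you via $\ker\phi=\hat{S}\cap\mathcal{N}(\hat{S}')=\mathrm{span}(G_A\cup G_B)$, the paper via the block structure of the full commutativity matrix $G^TBG'$), and then invoke the ancilla $Z_i/X_j$ pairs to witness rank at least $m$. The only cosmetic difference is that the paper phrases the lower bound through rank-invariance and an explicit $I_m$ submatrix, while you argue directly that no nontrivial $\mathbb{F}_2$-combination of the $Z_i$ lies in $\ker\phi$; one small wording slip is that the $\phi(Z_i)$ live in $\mathbb{F}_2^{|G'_C|}$ (equivalently, the $Z_i$ are independent in $\hat{S}/\ker\phi\cong\mathrm{span}(G_C)$), not literally ``in $\mathrm{span}(G_C)$''.
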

\begin{proof}
 For the two  codes $\hat{S}$ and $\hat{S}'$, take arbitrary generator matrices $G$,$G'$ and define $H'=G^TBG'$. Note that any two choices of generator matrices for the same code differ by an invertible row transformation, so the rank of $H'$ is invariant under different choices of the generator matrices. In particular, letting $G=(G_A|G_B|G_C), G'=(G'_A|G'_B|G'_C)$, we have
 $$H'=\begin{bmatrix}0& & \\ &0&\\ &&H\end{bmatrix}.$$

Note that $\mathrm{rank}(H) = |G_C|$, or else there would exist a combination of the rows of ${G_C'}^T$ that are orthogonal to all the columns of $G_C$.  Since all the vectors in $G_C'$ are already orthogonal to $G_A$ and $G_B$ by definition, this cannot happen as no vector in $G_C'$ lies in $\mathcal{N}(S)$. The same argument applies to $G'_C$ as well. Therefore $H$ is invertible, with $\mathrm{rank}(H')=\mathrm{rank}(H)=|G_C|$, and is independent of the choice of $G_C$.

To show that $|G_C|>m$, take $\bar{G}_C=(I^{\otimes n}\otimes Z\otimes I^{\otimes m-1},\ldots, I^{\otimes n+m-1}\otimes Z )$ and $\bar{G}'_C=(I^{\otimes n}\otimes X\otimes I^{\otimes m-1},\ldots, I^{\otimes n+m-1}\otimes X )$, each of size $m$. By extending them to generator matrices $\bar{G}$ and $\bar{G}'$ for $\hat{S}$ and $\hat{S}'$ respectively, we get a commutativity matrix $\bar{H}$ with an invertible submatrix of size $m\times m$, namely
$$\bar{G}_C^TB\bar{G}_C'=I_{m},$$
and so $\mathrm{rank}(\bar{H})=|G_C|\geq m$.
\end{proof}

\begin{Lemma}
    For $D(\cdot||\cdot)$ the KL-divergence, let $$P(n,m,d) = 4^{n+m}e^{-D(\frac{d}{n+m}\Vert \frac34)(n+m)}\cdot (m+1)\cdot 2^{-m}.$$ Then $P<\varepsilon$ for some $m = O(d\log\frac{n}{d}+\log \frac1\varepsilon)$.
    \label{lem:3}
\end{Lemma}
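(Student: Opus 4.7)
The plan is to reduce the expression for $P$ to a cleaner form in which $m$ appears essentially linearly, and then solve for $m$. As a first step, I would use the identity $D(p\|3/4) = \log 4 - p\log 3 - H(p)$, where $H$ denotes the binary entropy, to rewrite
$$4^{n+m}e^{-D(d/(n+m)\|3/4)(n+m)} = 3^d\, e^{(n+m)\, H(d/(n+m))}.$$
Combined with the standard estimate $NH(d/N) \leq d\log(eN/d)$ (which follows from $-(1-x)\log(1-x)\leq x$), this yields the clean upper bound
$$P(n,m,d) \;\leq\; \left(\frac{3e(n+m)}{d}\right)^d (m+1)\cdot 2^{-m}.$$
Conceptually, the right-hand side is just the usual binomial bound $\binom{n+m}{<d}\cdot 3^{d-1}$ on the number of Pauli errors of weight less than $d$ on $n+m$ qubits, times the polynomial slack $(m+1)\cdot 2^{-m}$.

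Taking logarithms, it then suffices to find $m$ satisfying
$$m\log 2 \;\geq\; d\log\frac{3e(n+m)}{d} + \log(m+1) + \log\frac{1}{\varepsilon}.$$
My approach is to posit $m := C_1\, d\log(n/d) + C_2\log(1/\varepsilon) + C_3 d$ for absolute constants $C_1,C_2,C_3$ to be pinned down at the end. The left-hand side grows linearly in $m$ with slope $\log 2$, while the dominant contributions on the right — namely $d\log(n/d)$ (provided $n+m = O(n)$) and $\log(1/\varepsilon)$ — are exactly constant multiples of the chosen $m$. The $\log(m+1)$ term is sublinear in $m$ and gets absorbed, and the additive $O(d)$ slack accounts for the $\log(3e)$ constant inside the logarithm. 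Matching coefficients then fixes sufficiently large $C_1,C_2,C_3$.

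The main obstacle is the self-referential appearance of $m$ inside $\log((n+m)/d)$: a priori my choice of $m$ could exceed $n$, in which case $\log((n+m)/d)$ is not immediately $O(\log(n/d))$. I would handle this by a two-case split. If $d\log(n/d) + \log(1/\varepsilon) = O(n)$, then $n+m \leq 2n$ and the above calculation goes through directly, giving $d\log(3e(n+m)/d) \leq d\log(n/d) + O(d)$. In the opposite regime, $n+m$ is dominated by $m$, so $\log((n+m)/d) = O(\log(m/d)) = O(\log d + \log\log n + \log\log(1/\varepsilon))$, which is still dwarfed by the linear $m\log 2$ on the left and thus closes the inequality with room to spare. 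Aside from this case analysis, the remainder of the argument is routine algebraic manipulation.
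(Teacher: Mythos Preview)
Your approach is correct and noticeably more direct than the paper's. By invoking the identity $D(p\Vert 3/4) = \ln 4 - p\ln 3 - H(p)$ you collapse the Chernoff-style expression back into the binomial count $\sim\binom{n+m}{<d}3^{d}$ it originated from, after which everything reduces to the single inequality $m\ln 2 \geq d\ln\tfrac{3e(n+m)}{d} + \ln(m+1) + \ln\tfrac{1}{\varepsilon}$, solved by your ansatz plus a case split. The paper instead keeps the KL-divergence throughout: it first locates the value $\tilde\alpha = \tilde m/n$ at which the ``dominant term'' $(2+\alpha)\ln 2 - (1+\alpha)D\bigl(\tfrac{d}{n(1+\alpha)}\Vert\tfrac{3}{4}\bigr)$ vanishes, uses convexity of $D(p\Vert q) - p\ln p$ to bound $\tilde m = O(d\log(n/d))$, and then differentiates $f(n,m,d)$ in $m$ to show it decreases at a uniform rate beyond $\tilde m$, so that an additional $O(\log((\tilde m+1)/\varepsilon))$ suffices. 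Your route buys transparency and bypasses the derivative estimate; the paper's route avoids the case split on $m$ versus $n$. One caveat: in your second case the parenthetical bound $\log(m/d) = O(\log d + \log\log n + \log\log(1/\varepsilon))$ carries a spurious $\log d$ term, and multiplying through by $d$ it would not in general be dominated by $m$ (take $n\approx 2d$ and $\log(1/\varepsilon)$ only slightly larger than $d$). The clean fix is to use $d\ln(m/d) \leq m/e$ directly---the maximum of $x\mapsto x\ln(m/x)$---which leaves a positive gap $\ln 2 - 1/e$ on the linear-in-$m$ side and closes the case without the intermediate estimate.
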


\begin{proof}
    Let $\alpha = m/n$. Then $P(n,m,d)<\varepsilon$ can be rewritten as
    \begin{align*}f(n,m,d) :&=\log\frac{P(n,m,d)}{\varepsilon} \\ &=\log \frac{m+1}{\varepsilon}+n\biggl((2+\alpha)\log 2\\ &-(1+\alpha)D\biggl(\frac{d}{n(1+\alpha)}\Vert \frac{3}{4}\biggr)\biggr)<0.\end{align*}
    We first compute the dominant term, i.e.\ the $\alpha$ such that $$\left(2+\alpha\right)\ln 2-(1+\alpha)D\left(\frac{d}{n(1+\alpha)}\Vert \frac{3}{4}\right)=0.$$  Doing this we obtain
    \begin{align*}
        (2-\frac{\alpha}{1+\alpha})\ln 2&=D\left(\frac{d}{n(1+\alpha)}\Vert \frac{3}{4}\right)\\
        (2-\frac{\alpha}{1+\alpha})\ln 2&\geq2\ln 2+\frac {d}{n(1+\alpha)}\left(\ln\frac{d}{3n(1+\alpha)}-1\right)\\
        \alpha n &\leq\frac d{\ln 2}\left(\ln\frac{3n(1+\alpha)}{d}+1\right)\\
        m &\leq \frac{1}{\ln 2} d(\log\frac nd + (1 + \ln3)),
    \end{align*}
    where we have used convexity of $D(p\Vert q)-p\ln p$ with respect to $p$. Letting $\tilde{\alpha}$ denote the solution to 
        $\left(2+\alpha\right)\ln 2-(1+\alpha)D\left(\frac{d}{n(1+\alpha)}\Vert \frac{3}{4}\right)=0$, we have that $\tilde{m}:=\tilde{\alpha}n=O(d+d\log\frac nd)$.

     We now have that $f(n,\tilde{m}, d)=\log \frac{\tilde{m}+1}{\epsilon}$. Taking the derivative of $f$ with respect to $m$, for all $\alpha>\tilde{\alpha}$ we have
    \begin{align*}
        &\frac{\partial f(n,m,d)}{\partial m} = \\ &=\frac{1}{m+1} -D(\frac{d}{(n+m)}\Vert\frac{3}{4})-(m+n)\frac{\partial D(\frac{d}{n+m}\Vert\frac{3}{4})}{\partial m}\\
        &\leq \frac{1}{m+1}-\frac{2+\tilde{\alpha}}{1+\tilde{\alpha}}\log 2+\frac{d}{m+n}\left(\log \frac{d}{3(m+n-d)}\right)\\
        &\leq \frac{1}{m+1}-\frac{1}{1+\tilde{\alpha}}\log 2+\frac{d}{n+m}\left(\log \frac{d}{3(n+m-d)}\right)\\
        &\leq -\frac{1}{1+\tilde{\alpha}}\ln 2+0.1
    \end{align*}
    for $m\geq 10$. For fixed $n$, $\tilde{\alpha}$ is monotonically increasing as a function of $d$. By the quantum singleton bound, $\frac{d-1}{n}<\frac{1}{2}$, and $\tilde{\alpha}<3$ even in this case. Therefore $\frac{\partial f(n,m,d)}{\partial m}\leq -0.05$ when $m\geq 10$, so taking $$m=\tilde{m}+20\log \frac{\tilde{m}+1}\epsilon+O(1)=O(d\log\frac{n}{d} + \log\frac1\varepsilon)$$ suffices to make $f(n,m,d)<0$.
\end{proof}

\section{Fault-tolerant measurement} \label{Shor}
For completeness, we include a code switching circuit between adjacent codes, using Shor-style measurement \cite{Shor:1996}.  We assume access to a collection of verified CAT states.  Let $g = P_1 \otimes \ldots \otimes P_n$ and $g' = P_1' \otimes \ldots \otimes P_n'
$.  The measurements are done on the supports of $g$ and $g'$.  To make the diagram simpler, we suppose that the supports include qubits $1,2$, and $n$.  Then the circuit obtained from the SRA to convert from the code with stabilizer $g'$ to the adjacent code with stabilizer $g$ is given by the following.

\begin{figure}[h]
\hspace*{-1.1cm}  
\includegraphics[scale = 0.42]{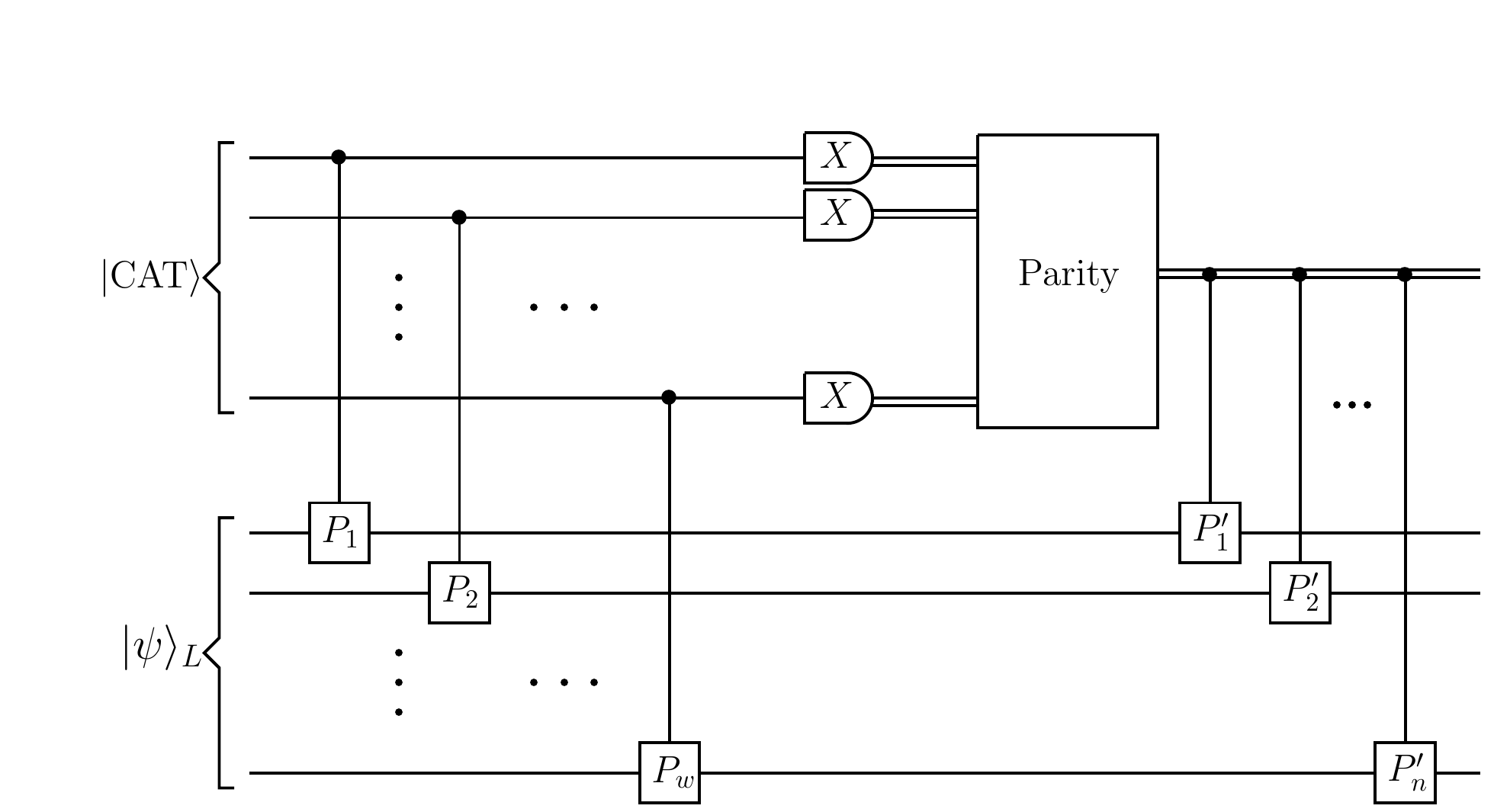}
\caption{A generic circuit switching between adjacent codes using Shor-style measurement.}\end{figure}


\newpage

\bibliography{bibliography}

\end{document}